\documentclass[a4paper, 11pt]{article}
\usepackage[utf8]{inputenc}
\usepackage[top=1in, bottom=1.2in, left=1in, right=1in]{geometry}
\usepackage{amsmath, amssymb, amsthm}
\usepackage{enumerate}
\usepackage{hyperref}

\newcommand{\T}{{\rm T}}
\newcommand{\cc}{\mathbb{C}}
\newcommand{\h}{\mathcal{H}}
\newcommand{\s}{\mathcal{S}}
\newcommand{\bra}[1]{\left\langle #1 \right|}
\newcommand{\ket}[1]{\left| #1 \right\rangle}
\newcommand{\braket}[1]{\left\langle #1 \right\rangle}

\newtheorem{definition}{Definition}[section]
\newtheorem{proposition}{Proposition}[section]
\newtheorem{theorem}[proposition]{Theorem}
\newtheorem{lemma}[proposition]{Lemma}

\numberwithin{equation}{section}

\title{{\bf A Subspace of Maximal Dimension with\\ Bounded Schmidt Rank}}
\author{Priyabrata Bag \& Santanu Dey}

\date{}

\begin{document}
\maketitle

\begin{abstract}
\noindent We study Schmidt rank for a vector (i.e., a pure state) and Schmidt number
for a mixed state which are entanglement measures. We show that if a subspace of a
certain bipartite system contains no vector of Schmidt rank $\leqslant k$, then any
state supported on that space has Schmidt number at least $k+1$. A construction of
subspace of $\cc^m \otimes \cc^n$ of maximal dimension, which does not contain any
vector of Schmidt rank less than $3$, is given here.\\
\mbox{}\\
\noindent{\bf Keywords:} Schmidt number; Schmidt rank.\\
\mbox{}\\
\noindent{\bf Mathematics Subject Classification:} 81P68; 81P40; 15A03.
\end{abstract}

\section{Introduction}
\label{intro}

A \emph{state} (also known as the density matrices) is a positive matrix whose trace is equal to 1.
A pure state $\ket{v} \in \cc^m \otimes \cc^n$ (or $\ket{v} \bra{v}$) is said to be a
\emph{product state} if it can be written as
$\ket{v}=\ket{v_1}\otimes \ket{v_2}=\ket{v_1} \ket{v_2}$ for some $\ket{v_1} \in \cc^m$ and
$\ket{v_2} \in \cc^n$; otherwise it is called \emph{entangled}. The states, which are not pure,
are also referred to as \emph{mixed states}. If $\{\ket{e_0},\ket{e_1}\}$ is the standard basis of
$\cc^2,$ then the states
\[\frac{\ket{e_0} \ket{e_0}\pm\ket{e_1} \ket{e_1}}{\sqrt{2}}
\mbox{~and~}\frac{\ket{e_0} \ket{e_1}\pm\ket{e_1} \ket{e_0}}{\sqrt{2}}\]
are famous entangled states (cf. Sections 1.3.7 and 2.3 of \cite{NC13:QCQI}), which are known as
\emph{Bell states} or \emph{EPR pairs}.

A mixed state $\rho\in M_m\otimes M_n$ is called \emph{separable} if it is convex combination of
pure product states; otherwise it is called \emph{entangled}.
Entanglement is the key property of quantum systems which is responsible for the higher
efficiency of quantum computation and tasks like teleportation, super-dense coding, etc
(cf. \cite{HHHH09:QE}). The Schmidt rank of vectors and Schmidt number
of states in a bipartite finite dimensional Hilbert space are measures of entanglement. In
Section~\ref{SR&SN}, we establish a sufficient condition for states on a bipartite finite
dimensional Hilbert space to be of Schmidt number bounded below by some positive integer. A
construction of subspace of $\cc^m \otimes \cc^n$ of maximal dimension, which does not contain any
vector of Schmidt rank less than $3$, is given in Section~\ref{SMDSR}.

In \cite{CMW08:DSBSR}, it was proved  using algebraic geometric techniques that for a bipartite
system $\cc^m \otimes \cc^n,$ the dimension of any subspace of Schmidt rank greater than or equal
to $k$ is bounded above by $(m-k+1)(n-k+1).$  In \cite[Proposition~10]{CMW08:DSBSR} it was also shown
that this bound is attained. In this article, motivated by the analysis done in \cite{Bha06:CESMD},
we present an alternate approach to prove that this bound is attained for $k=3.$ We construct a
subspace of $\cc^m \otimes \cc^n$ of Schmidt rank greater than or equal to 3 and, unlike
\cite{CMW08:DSBSR}, we also have a basis of this subspace consisting of elements of Schmidt rank 3.
For the case when a subspace of $\cc^m\otimes\cc^n$ is of Schmidt rank  greater than or equal to 2
(i.e., the subspace does not contain any product vector), the maximum dimension of that subspace
is  $(m-1)(n-1),$ and this was first proved  in \cite{Par04:OMDCES} and \cite{Wal02:UGT}
(cf. \cite{Bha06:CESMD}).

\section{Schmidt Rank and Schmidt Number}
\label{SR&SN}

Let $\h$ denote the bipartite Hilbert space $\cc^m\otimes\cc^n$. By Schmidt decomposition theorem
(cf. \cite[Section~2.5]{NC13:QCQI}), any pure state $\ket{\psi}\in\h$ can be written as
\begin{equation}
 \ket{\psi}=\sum_{j=1}^k\alpha_j\ket{u_j}\otimes\ket{v_j} \label{sd}
\end{equation}
for some $k\leqslant\min\{m,n\}$, where $\{\ket{u_j}:1\leqslant j\leqslant k\}$ and
$\{\ket{v_j}:1\leqslant j\leqslant k\}$ are orthonormal sets in $\cc^m$ and $\cc^n$ respectively,
and $\alpha_j$'s are nonnegative real numbers satisfying $\sum_j\alpha_j^2=1$.
\begin{definition}
 In the Schmidt decomposition \eqref{sd} of a pure bipartite state $\ket{\psi}$ the minimum number
 of terms required in the summation is known as the Schmidt rank of $\ket{\psi}$, and it is
 denoted by $SR(\ket{\psi})$.
\end{definition}

In the bipartite Hilbert space $\cc^m\otimes\cc^n,$ for any $1 \leqslant r \leqslant\min\{m,n\},$
there is at least some state $\ket{\psi}$ with $SR(\ket{\psi})=r.$ Any state $\rho$ on a finite
dimensional Hilbert space $\h$ can be written as
\begin{equation}
 \rho=\sum_jp_j\ket{\psi_j}\bra{\psi_j}, \label{spd}
\end{equation}
where  $\ket{\psi_j}$'s are pure states in $\h$ and $\{p_j\}$ forms a probability
distribution. The following notion was introduced in \cite{TH00:SNDM}:
\begin{definition}
 The Schmidt number of a state $\rho$ on a bipartite finite dimensional Hilbert space $\h$ is defined
 to be the least natural number $k$ such that $\rho$ has a decomposition of the form given in 
 \eqref{spd} with $SR(\ket{\psi_j})\leqslant k$ for all $j$. The Schmidt number of $\rho$ is
 denoted by $SN(\rho)$.
\end{definition}

The next theorem gives a sufficient condition for a state to have Schmidt number greater than $k.$
\begin{theorem}
 Let $\s$ be a subspace of $\h=\cc^m\otimes\cc^n$ which does not contain any vector of Schmidt rank
 lesser or equal to $k$. Then any state $\rho$ supported on $\s$ has Schmidt number at least $k+1$.
\end{theorem}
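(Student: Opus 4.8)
The plan is to prove the statement by contradiction, relying on the elementary fact that every pure state occurring with strictly positive weight in a convex decomposition of a density matrix $\rho$ must lie in the range (support) of $\rho$. By ``$\rho$ is supported on $\s$'' I understand $\operatorname{range}(\rho)\subseteq\s$; recall that for a positive self-adjoint operator one has $\operatorname{range}(\rho)=(\ker\rho)^{\perp}$, so a vector lies in $\operatorname{range}(\rho)$ precisely when it is orthogonal to $\ker\rho$.

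Suppose, for contradiction, that $SN(\rho)\leqslant k$. Then by the definition of Schmidt number there is a decomposition $\rho=\sum_j p_j\ket{\psi_j}\bra{\psi_j}$ with each $p_j>0$, $\sum_j p_j=1$, and $SR(\ket{\psi_j})\leqslant k$ for every $j$. The next step is to verify that each $\ket{\psi_j}$ lies in $\operatorname{range}(\rho)$: for any $\ket{\phi}\in\ker\rho$ we have $0=\bra{\phi}\rho\ket{\phi}=\sum_j p_j\,|\braket{\phi|\psi_j}|^2$, and since every summand is non-negative and each $p_j$ is strictly positive, this forces $\braket{\phi|\psi_j}=0$ for all $j$. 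Hence $\ket{\psi_j}\perp\ker\rho$, i.e.\ $\ket{\psi_j}\in\operatorname{range}(\rho)\subseteq\s$.

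But now $\s$ contains the vector $\ket{\psi_j}$, which has Schmidt rank at most $k$, contradicting the hypothesis that $\s$ contains no vector of Schmidt rank $\leqslant k$. Therefore no decomposition of $\rho$ with all Schmidt ranks $\leqslant k$ can exist, and $SN(\rho)\geqslant k+1$.

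There is essentially one substantive ingredient, namely the support lemma invoked in the second paragraph; everything else is bookkeeping with the definitions of Schmidt rank and Schmidt number. Since the short computation with $\ket{\phi}\in\ker\rho$ given above already supplies a self-contained proof of that lemma, I do not anticipate any real obstacle, and the argument works verbatim for every positive integer $k$.
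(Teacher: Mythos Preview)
Your proof is correct and follows the same overall strategy as the paper: argue by contradiction, take a decomposition $\rho=\sum_j p_j\ket{\psi_j}\bra{\psi_j}$ with all $SR(\ket{\psi_j})\leqslant k$, show that every $\ket{\psi_j}$ must lie in the range of $\rho$, and hence in $\s$, contradicting the hypothesis.

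The only real difference is in how the ``support lemma'' is established. The paper fixes $j=1$, writes $\rho=p_1\ket{v_1}\bra{v_1}+T$ with $T\geqslant 0$, and then carries out a two-case analysis depending on whether or not $\ker T\subseteq\{\ket{v_1}\}^{\perp}$, in each case exhibiting an explicit vector $\ket{\psi}$ with $\rho\ket{\psi}$ a nonzero scalar multiple of $\ket{v_1}$. Your argument is more elementary and more transparent: for $\ket{\phi}\in\ker\rho$ you compute $0=\bra{\phi}\rho\ket{\phi}=\sum_j p_j|\braket{\phi|\psi_j}|^{2}$ and conclude $\ket{\psi_j}\perp\ker\rho$, hence $\ket{\psi_j}\in\operatorname{range}(\rho)$. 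This handles all $j$ simultaneously and avoids the case split entirely; the paper's longer route buys nothing extra here.
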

\begin{proof}
Let $\rho$ be a state with Schmidt number, $SN(\rho)=r\leqslant k$. So, $\rho$ can be
written as \[\rho=\sum_jp_j\ket{v_j}\bra{v_j},\] where $SR(\ket{v_j})\leqslant k$ for all $j$ and
$\{p_j\}$ forms a probability distribution. First we show that each of $\ket{v_j}$ is in the range
of $\rho$. With out loss of generality, we show this for $j=1$. Write 
\begin{equation} \label{rqt}
\rho=p_1\ket{v_1}\bra{v_1}+T,
\end{equation}
where $p_1>0$ and $T$ is a nonnegative operator.
 
Let $\ket{\psi}\neq 0$ in $\h$ be such that $T\ket{\psi}=0$ and $\braket{v_1|\psi}\neq 0$. Clearly,
$\rho\ket{\psi}$ is a nonzero multiple of $\ket{v_1}$ and hence $\ket{v_1}$ belongs to the range of
$\rho$.
Now suppose the null space of $T$ is contained in $\{\ket{v_1}\}^\perp$. Then $\ket{v_1}$ is in the
range of $T$. So, there exists $\ket{\psi}\neq 0$ such that 
$T\ket{\psi}=\ket{v_1}.$ Thus, \eqref{rqt} yields
$$\rho\ket{\psi}=(p_1\braket{v_1|\psi}+1)\ket{v_1}.$$ 
If $\rho \ket{\psi}=0,$ then by the positivity of 
$\rho, p_1 \ket{v_1}\bra{v_1}$ and $T,$ and \eqref{rqt}, we obtain $\braket{T \psi|\psi}=0,$ 
and hence $T \ket{\psi}=0.$ This is a contradiction. Thus, we deduce that $\rho \ket{\psi} \neq 0.$ 
Therefore, it follows that $\ket{v_1}$ is in the range of $\rho$.
 
If such a $\rho$ is supported on $\s$, then the above statement gives a contradiction to the fact
that $\s$ does not contain any vector of Schmidt rank lesser or equal to $k$.
This completes the proof.
\end{proof}

\section{A Subspace of Maximal Dimension of\texorpdfstring{\\}{ }
Schmidt Rank \texorpdfstring{$\geqslant 3$}{>=3}}
\label{SMDSR}

Let $\h=\cc^m \otimes \cc^n$ as before. Let us fix an infinite dimensional Hilbert space
${\mathcal K}$ with orthonormal basis $\{\ket{e_0},\ket{e_1},\ldots \},$ and identify $\cc^m$ and
$\cc^n$ with $\mbox{span} \{\ket{e_0},\ket{e_1},\ldots,\ket{e_{m-1}}\}$ and
$\mbox{span} \{\ket{e_0},\ket{e_1},\ldots,\ket{e_{n-1}}\},$
respectively. Define $N=n+m-2,$ and  for $2 \leqslant d \leqslant N-2$ define
\begin{align}
 \s^{(d)}=\mbox{span} \{\ket{e_{i-1}}\otimes\ket{e_{j+1}} -2\ket{e_i}\otimes\ket{e_j}
 +\ket{e_{i+1}}\otimes\ket{e_{j-1}}:\notag\\
 1 \leqslant i \leqslant m-2,
 1 \leqslant j \leqslant n-2, i+j=d  \}, \label{SR2I}
\end{align}
\begin{equation}
 \s^{(0)}= \s^{(1)}= \s^{(N-1)}= \s^{(N)}=\{0\}, \mbox{~and~}
 \s:= \bigoplus\limits_{d=0}^N \s^{(d)}. \label{SR2II}
\end{equation}
We claim that $\s$ does not contain any vector of Schmidt rank less than 3.
\begin{lemma}\label{ptnm}
The columns of the $(t+2)\times t$ matrix
\[A_t=\left[ \begin{array}{ccccc}
1 & 0&\ldots&0&0\\
-2 & 1 &\ldots &0 &0\\
1  & -2 &\ldots &0&0\\
0 & 1 &\ldots & 0& 0\\
\vdots&\vdots&&\vdots&\vdots\\
0&0&\ldots&1 & 0\\
0&0&\ldots&-2 & 1 \\
0&0&\ldots&1  & -2 \\
0&0&\ldots&0 & 1
\end{array}\right]\]
are linearly independent such that any linear combination of these columns has at least 3
nonzero entries.
\end{lemma}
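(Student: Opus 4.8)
The plan is to translate the problem into one about polynomials. Index the $t+2$ rows of $A_t$ by $1,\dots,t+2$ and identify a vector $v=(v_1,\dots,v_{t+2})\in\cc^{t+2}$ with the polynomial $p_v(x)=\sum_{i=1}^{t+2}v_ix^i$, so that the number of nonzero entries of $v$ equals the number of nonzero coefficients of $p_v$. Under this identification the $k$-th column of $A_t$ becomes $x^k-2x^{k+1}+x^{k+2}=x^k(1-x)^2$, and hence a linear combination $\sum_{k=1}^tc_k(\text{$k$-th column})$ becomes $p_v(x)=(1-x)^2q(x)$ with $q(x)=\sum_{k=1}^tc_kx^k$. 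Linear independence is then immediate: if such a combination is the zero vector then $(1-x)^2q(x)=0$ in $\cc[x]$, so $q=0$ and all $c_k=0$.

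For a nonzero combination I would show that $p_v$ has at least three nonzero coefficients. The key observation is that $(1-x)^2\mid p_v(x)$ is equivalent to the two scalar conditions $p_v(1)=0$ and $p_v'(1)=0$, that is, $\sum_i v_i=0$ and $\sum_i i v_i=0$; one can also see these directly, since every column $w$ of $A_t$ satisfies $\sum_i w_i=1-2+1=0$ and, if its top nonzero entry lies in row $k$, $\sum_i i w_i=k-2(k+1)+(k+2)=0$, so both identities pass to arbitrary linear combinations. Now suppose, towards a contradiction, that $q\neq0$ but $p_v$ has at most two nonzero coefficients. If it has exactly one, say $v_rx^r$ with $v_r\neq0$, then $\sum_i v_i=v_r\neq0$, a contradiction. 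If it has exactly two, say $v_rx^r+v_sx^s$ with $r\neq s$ and $v_r,v_s\neq0$, then $v_r+v_s=0$ and $rv_r+sv_s=0$, which forces $(r-s)v_r=0$, again a contradiction. Hence $p_v$ has at least three nonzero coefficients, i.e.\ the linear combination has at least three nonzero entries.

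I do not expect a serious obstacle here; the only points needing care are the index bookkeeping in the $k\mapsto k{+}2$ staircase pattern of the columns and the elementary fact that divisibility by $(1-x)^2$ amounts to vanishing of the value and the first derivative at $x=1$. If one prefers to avoid generating functions entirely, the two moment identities above can be read off entrywise, and linear independence instead follows from the staircase shape of $A_t$: the first nonzero entry of the $k$-th column lies in row $k$, so in any vanishing linear combination the coefficient of smallest index must itself vanish, and inductively all coefficients vanish.
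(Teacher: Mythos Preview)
Your argument is correct and considerably shorter than the paper's. The paper proceeds by proving, via induction on $t$, that \emph{every} $t\times t$ minor of $A_t$ is nonzero, splitting into three cases according to which two rows are omitted and in one case computing the tridiagonal determinant $\det C_s=(-1)^s(s+1)$; from this it deduces that no nonzero combination can vanish in $t$ or more positions. You instead identify the column space with $(1-x)^2\cdot\{\text{polynomials}\}$ and observe that a nonzero polynomial divisible by $(1-x)^2$ must have at least three nonzero coefficients, which you verify via the two moment conditions $\sum_i v_i=0$ and $\sum_i i v_i=0$. This bypasses all minor computations and case analysis; the paper's route, however, yields the stronger fact that every maximal minor of $A_t$ is nonzero, which your argument does not give (though that extra strength is not used later in the paper).
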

\begin{proof}
At first, we show that  all the order-$t$ minors of $A_t$ are nonzero. We would prove this
statement by induction. The statement clearly holds for $l=1,$ i.e., for the matrix
$A_1=\left[ \begin{array}{c}
1\\-2\\ 1 \end{array}\right].$
Assume that the statement holds for $l=t-1.$ We would show that the statement is true for
$l=t$ by observing the following cases:
\begin{enumerate}[1.]

\item If the two rows deleted from $A_t$ to obtain the order-$t$ minor does not include the first
row, then let us denote the $t \times t$ matrix by $F$ whose determinant gives this order-$t$
minor. By two elementary row operations (involving the first row) on $F$  which does not change
the determinant, we can obtain  $(1, 0,0, \ldots,0)^{\T}$ as the first column of $F,$ where
superscript $\T$ denotes the transpose. The
order-$(t-1)$ minor obtained by deleting the first row and the first column of $F$ is nonzero
by induction hypothesis because it is a order-$(t-1)$ minor of $A_{t-1}.$ It follows that the
$\det F$ is nonzero.

\item If the two rows deleted from $A_t$ to obtain the order-$t$ minor does not include the last
row, then let us denote the $t \times t$ matrix by $L$ whose determinant gives this order-$t$
minor. By two elementary row operations (involving the last row) on $L$  which does not change the
determinant, we can obtain  $(0, \ldots,0,0,1)^{\T}$ as the last column of $L$. The order-$(t-1)$
minor obtained by deleting the last row and the last column of $F$ is nonzero by induction
hypothesis because it is a order-$(t-1)$ minor of $A_{t-1}.$ It follows that the $\det L$
is nonzero.

\item If the first and the last row is deleted from $A_t$ to obtain the  order-$t$ minor,
then we need to show that the determinant of
\[C_s=\left[ \begin{array}{ccccc} 
-2 & 1 &\ldots &0 &0\\
1  & -2 &\ldots &0&0\\
0 & 1 &\ldots & 0& 0\\
\vdots&\vdots&&\vdots&\vdots\\
0&0&\ldots&1 & 0\\
0&0&\ldots&-2 & 1 \\
0&0&\ldots&1  & -2 
\end{array}\right]\]
is nonzero. We claim that $\det C_s= (-1)^s (s+1).$ Once again induct on $s.$ Verifying for
$s=2,3$ directly. Assume that the formula holds for all $k<s$ and $s>3.$ Then expand along
the first row to see that
\begin{align*}
 \det C_s =& -2(\det C_{s-1})-\det C_{s-2}\\
 =& -2(-1)^{s-1} s -(-1)^{s-2}(s-1)\\
 =& (-1)^{s}(s+1).
\end{align*}
\end{enumerate}
Thus, all the order-$t$ minors of $A_t$ are nonzero. Let us assume that a linear combination of
columns of $A_t$ has less than 3 non zero entries, i.e., a linear combination of columns of $A_t$
has $t$ or more zero entries. Let $I$ be the set of indices of any $t$ of those zero entries.
So, the $t \times t$ submatrix formed from the rows of $A_t$ which are indexed by $I,$  is such
that the columns of this submatrix is linearly dependent. We deduce that the corresponding
order-$t$ minor is zero and this is a contradiction. This proves that any linear combination of
the columns of $A_t$ has at least 3 nonzero entries. It also follows that the columns of the matrix
$A_t$ are linearly independent.
\end{proof}

\begin{theorem}\label{mdsr}
Let $m$ and $n$ be  natural numbers such that $3 \leqslant \min \{m,n\}.$ The space $\s$ defined
by equations \eqref{SR2I} and \eqref{SR2II} does not contain any vector
of Schmidt rank $\leqslant 2$  and  $\dim\s=(m-2)(n-2)$.
\end{theorem}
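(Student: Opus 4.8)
The plan is to pass to coefficient matrices. Writing $\ket{\psi}=\sum_{a=0}^{m-1}\sum_{b=0}^{n-1}M_{a,b}\,\ket{e_a}\otimes\ket{e_b}$ for $\ket{\psi}\in\h$, the Schmidt decomposition of $\ket{\psi}$ is precisely a singular value decomposition of the matrix $M=(M_{a,b})$, so $SR(\ket{\psi})=\mathrm{rank}(M)$; equivalently, $SR(\ket{\psi})\leqslant 2$ exactly when every $3\times 3$ minor of $M$ vanishes. The key structural observation is that each generator of $\s^{(d)}$ indexed by $(i,j)$ lies in the ``anti-diagonal'' subspace $\mathrm{span}\{\ket{e_a}\otimes\ket{e_b}:a+b=i+j=d\}$, and these subspaces are mutually orthogonal for distinct values of the index sum. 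Hence $\s=\bigoplus_d\s^{(d)}$ is an orthogonal direct sum, and if $\ket{\psi}=\sum_d\ket{\psi^{(d)}}$ with $\ket{\psi^{(d)}}\in\s^{(d)}$, then the entries of $M$ on the $d$-th anti-diagonal $\{(a,b):a+b=d\}$ come from $\ket{\psi^{(d)}}$ alone.

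For the dimension, fix $d$ and list the generators of $\s^{(d)}$ by the parameter $i$ (with $j=d-i$). Reading off coefficients along the $d$-th anti-diagonal identifies this list with the set of columns of the matrix $A_{t_d}$ of Lemma~\ref{ptnm}, where $t_d$ is the number of pairs $(i,j)$ with $1\leqslant i\leqslant m-2$, $1\leqslant j\leqslant n-2$, $i+j=d$: a short count shows that the admissible basis vectors $\ket{e_a}\otimes\ket{e_b}$ with $a+b=d$ number exactly $t_d+2$, matching the row count of $A_{t_d}$, and the generator for $i$ has coefficients $(1,-2,1)$ in consecutive slots along that anti-diagonal, i.e. it is the corresponding column of $A_{t_d}$. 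By Lemma~\ref{ptnm} these columns are linearly independent, so $\dim\s^{(d)}=t_d$; summing over $d$ and using orthogonality gives $\dim\s=\sum_d t_d=(m-2)(n-2)$, the number of lattice points $(i,j)$ with $1\leqslant i\leqslant m-2$ and $1\leqslant j\leqslant n-2$.

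For the Schmidt rank bound, suppose for contradiction that some nonzero $\ket{\psi}\in\s$ has $\mathrm{rank}(M)\leqslant 2$. Let $d_0$ be the smallest $d$ with $\ket{\psi^{(d)}}\neq 0$. Then $M_{a,b}=0$ for all $a+b<d_0$, whereas the coefficient vector of $M$ along the $d_0$-th anti-diagonal equals $A_{t_{d_0}}$ applied to a nonzero vector (the coordinate vector of $\ket{\psi^{(d_0)}}$ in the generators), and so, by Lemma~\ref{ptnm}, has at least three nonzero entries, say at positions $(a_1,b_1),(a_2,b_2),(a_3,b_3)$ with $a_1<a_2<a_3$ and $b_r=d_0-a_r$, hence $b_1>b_2>b_3$. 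Consider the $3\times 3$ submatrix of $M$ on rows $a_1,a_2,a_3$ and columns $b_3,b_2,b_1$. Its anti-diagonal entries are $M_{a_1,b_1},M_{a_2,b_2},M_{a_3,b_3}$, all nonzero, while each of the three entries strictly above the anti-diagonal sits on an anti-diagonal of $M$ of index $<d_0$ (for instance $a_1+b_2=d_0-(a_2-a_1)<d_0$) and therefore vanishes. Thus this submatrix is anti-triangular with nonzero anti-diagonal, so its determinant is $\pm M_{a_1,b_1}M_{a_2,b_2}M_{a_3,b_3}\neq 0$, contradicting $\mathrm{rank}(M)\leqslant 2$. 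The same computation applied to a single generator shows it already has Schmidt rank exactly $3$, so the generators form a basis of $\s$ consisting of Schmidt rank $3$ vectors.

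The one place that needs care is this last reduction: the hypothesis concerns arbitrary vectors mixing all the $\s^{(d)}$, yet Lemma~\ref{ptnm} controls only a single anti-diagonal. The device that bridges the gap is passing to the minimal occupied anti-diagonal $d_0$: the vanishing of $M$ on all anti-diagonals below $d_0$ forces the chosen $3\times 3$ block to be anti-triangular, so the ``three nonzero entries'' supplied by the lemma on that one anti-diagonal already produce a nonvanishing minor, with no interference from the higher components. The remaining ingredients --- orthogonality of the direct sum, the entry count $t_d+2$ on the $d$-th anti-diagonal, and the final summation --- are routine bookkeeping.
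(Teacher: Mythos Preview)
Your proof is correct and follows essentially the same approach as the paper: the identification of $\cc^m\otimes\cc^n$ with $m\times n$ matrices, the use of Lemma~\ref{ptnm} to control each anti-diagonal, and the extraction of a nonvanishing $3\times 3$ minor from an anti-triangular block are all the same ideas. The only cosmetic difference is that the paper picks the \emph{largest} anti-diagonal index $\kappa$ carrying a nonzero component (so the block has zeros \emph{below} its anti-diagonal), whereas you pick the smallest $d_0$ (zeros \emph{above}); these are symmetric and equally valid.
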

\begin{proof}
Let us fix an infinite dimensional Hilbert space ${\mathcal K}$ with orthonormal basis
$\{\ket{e_0},\ket{e_1},\ldots \},$ and identify $\cc^m$ and $\cc^n$ with
$\mbox{span} \{\ket{e_0},\ket{e_1},\ldots,\ket{e_{m-1}}\}$ and 
$\mbox{span} \{\ket{e_0},\ket{e_1},\ldots, \ket{e_{n-1}}\},$
respectively, as done before. If for any element
$|v \rangle \in \cc^m \otimes \cc^n,$ we have  $|v \rangle=\sum_{ij} c_{ij} |e_i \rangle \otimes
|e_j \rangle,$ then we identify the
bipartite system $\cc^m\otimes\cc^n$ with the space $M_{m \times n}$ of $m\times n$ matrices by the 
isometric isomorphism $\phi:\cc^m\otimes\cc^n \longrightarrow M_{m \times n}$ defined by 
$\phi(|v \rangle)= [c_{ij}]_{m \times n}.$
It follows from the standard proof of Schmidt decomposition based on the singular value decomposition
that, an element of $\cc^m\otimes\cc^n$ has Schmidt rank at least $r$, if and only if  the
corresponding $m\times n$ matrix is of rank at least $r$. Also, it is known that a matrix has rank
at least $r$ if and only if it has a nonzero minor of order $r$ (cf. \cite[page~18]{HJ85:MA}). Thus,
it is enough to construct a set
of $(m-2)(n-2)$ linearly independent matrices, which are image under $\phi$ of a basis of $\s,$ such
that any linear combination of these matrices has a nonzero minor of order 3.

Label the anti-diagonals of any $m \times n$ matrix by non-negative integers $k,$ such that the first
anti-diagonal from upper left (of length one) is labelled $k=0$  and value of $k$ increases from upper
left to lower right. Let the length of the $k^{th}$ anti-diagonal be denoted by $|k|$. 
If we assume, without loss of generality, that  $m\leqslant n,$ then the formula for $|k|$ is
\[ 
|k|=\left\{ \begin{array}{ll} k+1 & \mbox{for~} 0 \leqslant k \leqslant m-1\\
m & \mbox{for~}  m-1 \leqslant k \leqslant n-1\\
m+n-(k+1) & \mbox{for~} n-1 \leqslant k \leqslant N. \end{array} \right.
\]
Recall that for $2 \leqslant k \leqslant N-2$, $\s^{(k)}$ is generated by the set
\begin{align*}
 B_k=\{\ket{e_{i-1}} \otimes \ket{e_{j+1}} -2 \ket{e_i} \otimes \ket{e_j} + \ket{e_{i+1}} \otimes
 \ket{e_{j-1}}: 1 \leqslant i \leqslant m-2,\\
 1 \leqslant j \leqslant n-2, i+j=k\}.
\end{align*}

For $2 \leqslant k \leqslant N-2$, let $\widetilde{B}_k$ denote the image of $B_k$ under the map
$\phi$. Note that any
element of $\widetilde{B}_k$ is the matrix obtained from the $m \times n$ zero matrix by
replacing the $k^{th}$ anti-diagonal
by $(0, \ldots,0, 1,-2,1,0, \ldots,0).$ 
For  $2 \leqslant k \leqslant N-2$ and  $t=|k|-2,$ from Lemma~\ref{ptnm}, $\widetilde{B}_k$
is a set of $t$ linearly independent matrices.  Also, by Lemma~\ref{ptnm} it follows that if $M$
is a matrix obtained by taking any linear combination of
matrices from $\widetilde{B}_k,$ then $M$ has at least $3$ nonzero entries in the $k^{th}$
anti-diagonal and the entries of $M$,  other than  those on $k^{th}$ anti-diagonal, are zeros. Since the
determinant of the $3 \times 3$ submatrix with these $3$ nonzero elements in its principal
anti-diagonal is clearly nonzero, any linear combination of the matrices in $\widetilde{B}_k$ has at least
one nonzero order-$3$ minor, thus has rank at least $3.$

Let $\widetilde{B}= \displaystyle \bigcup_{k=2}^{N-2} \widetilde{B}_k$. Since elements from different 
$\widetilde{B}_k$ have different
nonzero anti-diagonal, $\widetilde{B}$ is linearly independent. Thus $B= \displaystyle \bigcup_{k=2}^{N-2} B_k$
is a basis for $\s.$ Let $C$ be a matrix obtained by an arbitrary
linear combination from the elements of $\widetilde{B}$. Let $\kappa$ be the largest $k$ for which the linear
combination involves an element from $\widetilde{B}_k$. The $\kappa^{th}$ anti-diagonal of $C$ has at least
$3$ nonzero elements. Because $\kappa$ labels the bottom-rightmost anti-diagonal of $C$ that contains nonzero
elements, the $3 \times 3$ submatrix of $C,$ with these 3 nonzero elements in
the principal anti-diagonal, has only zero entries in all its anti-diagonals which are below the principal
anti-diagonal. Hence  the $3 \times 3$ submatrix has nonzero determinant. Thus, the rank of $C$ is at least 3.
We conclude that  $\s$  does not contain any vector of Schmidt rank $\leqslant 2.$

The dimension of $\s$ is equal to the cardinality of $B$. Without loss of generality, assume $m\leqslant n$.
Then, the cardinality of $B$ is given by
\begin{align*}
 |B| =& \sum_{k=2}^{N-2} |B_k|=\sum_{k=2}^{N-2} (|k|-2)\\
 =& \sum_{k=2}^{m-2} (k-1)+\sum_{k=m-1}^{n-1} (m-2)+\sum_{k=n}^{m+n-4} (m+n-2-(k+1))\\
 =& (n-m+1)(m-2)+2\sum_{k=2}^{m-2}(k-1)\\
 =& (m-2)(n-2).
\end{align*}
\end{proof}

From the above theorem it follows that the basis 
\begin{align*}
 B=\bigcup_{k=2}^{N-2}\{\ket{e_{i-1}} \otimes \ket{e_{j+1}} -2 \ket{e_i} \otimes \ket{e_j} + \ket{e_{i+1}}
 \otimes \ket{e_{j-1}}: 1 \leqslant i \leqslant m-2,\notag\\
 1 \leqslant j \leqslant n-2, i+j=k\},
\end{align*}
of $\s$ is of Schmidt rank $3$ (i.e., all the elements have Schmidt rank $3$).


\bibliography{SMDBSRRef}
\bibliographystyle{plain}

\noindent Priyabrata Bag\\
Department of Mathematics\\
Indian Institute of Technology Bombay\\
Mumbai, Maharashtra 400076, India\\
E-mail: {\it priyabrata@iitb.ac.in}
\vspace*{0.5cm}\\
Santanu Dey\\
Department of Mathematics\\
Indian Institute of Technology Bombay\\
Mumbai, Maharashtra 400076, India\\
E-mail: {\it santanudey@iitb.ac.in}

\end{document}